\declaretheorem[style=remark]{definition}
\declaretheorem[style=remark,sibling=definition]{example}
\declaretheorem[sibling=definition]{theorem}
\declaretheorem[sibling=definition]{lemma}
\declaretheorem[sibling=definition, style=remark]{algorithm}
\newlist{steps}{enumerate}{1}
\setlist[steps]{%
  label={\textbf{Step~\arabic*.}},%
  ref={step~\arabic*},%
  leftmargin=1.4142cm}
\setlist[enumerate,1]{label={(\alph*)}}
\setlist[enumerate,2]{label={(\alph{enumi}.\arabic*)}}
\setlist[description,1]{font=\normalfont\itshape}
\definecolor{TodoColour}{HTML}{CE9F6F} 
\definecolor{CSTodoColour}{HTML}{78779B}
\definecolor{MyTodoColour}{HTML}{A5BA93}
\definecolor{LinkColour}{HTML}{044F67}
\definecolor{CiteColour}{HTML}{044F67}
\definecolor{URLColour}{HTML}{044F67}
\definecolor{EmphColour}{HTML}{86ABA5} 
\definecolor{SuperEmphColour}{HTML}{89729E}
\def\@linkcolor{LinkColour}
   \def\@anchorcolor{blue}
   \def\@citecolor{CiteColour}
   \def\@filecolor{blue}
   \def\@urlcolor{URLColour}
   \def\@menucolor{blue}
   \def\@pagecolor{blue}
  \edef\x{%
    \edef\noexpand\x{%
      \endgroup
      \noexpand\toks@{%
        \catcode 96=\noexpand\the\catcode`\noexpand\`\relax
        \catcode 61=\noexpand\the\catcode`\noexpand\=\relax
      }%
    }%
    \noexpand\x
  }%
\def\NoSpace~{{}}
\let\oldtheequation\theequation
\renewcommand\tagform@[1]{\maketag@@@{\ignorespaces#1\unskip\@@italiccorr}}
\renewcommand\theequation{(\oldtheequation)}
\definecolor{benimidori}{HTML}{78779B}
\definecolor{koubaiiro}{HTML}{DB5A6B}
\definecolor{QuestionColour}{named}{koubaiiro}
\let\oldemph=\emph
\renewcommand{\emph}[1]{\textcolor{EmphColour}{\oldemph{#1}}}
\DeclareMathOperator{\cont}{ct}
\DeclareMathOperator{\id}{id}
\DeclareMathOperator{\diag}{diag}
\DeclareMathOperator{\supp}{supp}
\renewcommand{\leq}{\leqslant}
\renewcommand{\geq}{\geqslant}
\renewcommand{\kappa}{\varkappa}
\renewcommand{\rho}{\varrho}
\newcommand{\qqtext}[1]{\qquad\text{#1}\qquad}
\newcommand{\qqtextq}[1]{\qquad\text{#1}\quad}
\newcommand{\qqand}{\qqtext{and}}
\newcommand{\qqwhere}{\qqtextq{where}}
\renewcommand{\[}{{[\![}}
\newcommand{\ordinal}[1]{\raisebox{0.7ex}{\scriptsize#1}}
\newcommand{\Mat}[3]{#1^{#2\times#3}}
\newcommand{\MatGr}[2]{\operatorname{GL}_{#2}(#1)}
\newcommand{\CV}[2]{#1^{#2}}
\newcommand{\ID}[1][\relax]{\mathbf{1}\ifx#1\relax\relax\else_{#1}\fi}
\def\ZEROAUX#1,#2;{_{#1\times#2}}
\newcommand{\ZERO}[1][\relax]{\mathbf{0}\ifx#1\relax\relax\else\ZEROAUX#1;\fi}
\newcommand{\Zinfty}{\hat\Z}
\newcommand{\ef}[3]{
  \varepsilon%
  \ifx#1*\relax\else_{#1}\fi%
  \ifx#2*\relax\else^{#2}\fi%
  \ifx#3*\relax\else(#3)\fi}
\newcommand{\efm}[3]{
  \varepsilon%
  \ifx#1*\relax\else_{#1}\fi%
  \ifx#2*\relax\else^{#2}\fi%
  \ifx#3*\relax\else(\!(#3)\!)\fi}
\newcommand{\EF}[3]{
  E%
  \ifx#1*\relax\else_{#1}\fi%
  \ifx#2*\relax\else^{#2}\fi%
  \ifx#3*\relax\else(#3)\fi}
\newcommand{\EFM}[3]{
  E%
  \ifx#1*\relax\else_{#1}\fi%
  \ifx#2*\relax\else^{#2}\fi%
  \ifx#3*\relax\else(\!(#3)\!)\fi}
\newcommand{\val}[1]{v_{#1}}
\newcommand{\ie}{that is}
\newcommand{\eg}{for example}
\newcommand{\wrt}{with respect to}
\newcommand{\UFD}{unique factorization domain}
\newcommand{\A}{\mathbb{A}}
\newcommand{\locA}{\overline{\A}}
\newcommand{\G}{\mathbb{G}}
\newcommand{\K}{\mathbb{K}}
\newcommand{\Q}{\mathbb{Q}}
\newcommand{\Z}{\mathbb{Z}}
\newcounter{ex}
\begin{document}

\begin{frontmatter}

  \title {A Family of Denominator Bounds for First Order Linear
    Recurrence Systems}

  \author[fsu]{Mark van Hoeij}
  \ead{hoeij@math.fsu.edu}

\author[xlim]{Moulay Barkatou}
\ead{moulay.barkatou@unilim.fr}

  \author[risc]{Johannes Middeke}
  \ead{jmiddeke@risc.jku.at}


  \address[fsu]{Department of Mathematics\\ 
    Florida State University\\  
    Tallahassee, FL 32306, USA}
  
 \address[xlim]{Universit\'e de Limoges, XLIM\\
    123, Av.~A.~Thomas\\
    87060 Limoges cedex, France}

  \address[risc]{Research Institute for Symbolic Computation (RISC)\\
    Johannes Kepler University\\
    Altenbergerstra\ss{}e 69, 4040 Linz, Austria}

\begin{abstract}
For linear recurrence systems, the problem of finding rational solutions
is reduced to the problem of computing polynomial solutions by computing
a content bound or a denominator bound.
There are several bounds in the literature.
The sharpest bound \cite{VHo98}
leads to polynomial solutions of lower degrees,
but as shown in \cite{GheffarAbramov},  
this advantage need not compensate for the time spent on computing that bound.

To strike the best 
balance between sharpness of the bound versus CPU time spent obtaining it,
we will give a family of bounds. The $J$'th member of this family is similar to \cite{AbramovBarkatou1998}
when $J=1$, similar to \cite{VHo98} when $J$ is large, and novel for intermediate values of $J$,
which give the best balance between sharpness and CPU time.  

The setting for our content bounds are systems $\tau(Y) = MY$ where $\tau$
is an automorphism of a \UFD, and $M$ is an invertible matrix with entries in its field of fractions.
This setting includes the shift case, the $q$-shift case, the multi-basic case and others.
We give two versions, a global version, and a version that bounds each entry separately.
\end{abstract}  
\end{frontmatter}

\section{Introduction}

Let $\A$ be a \UFD\ and let $\tau\colon\A \to \A$ be an
automorphism. We denote
the quotient field of $\A$ by $\K$ 
and extend $\tau$ to $\K$.
This paper considers systems of the form
\begin{equation}
  \tag{(\textsc{sys})}
  \tau(Y) = M Y
  \qqwhere
  M \in \MatGr{\K}n.
\end{equation}
The goal is to reduce the problem
of computing \emph{rational solutions} $Y \in \CV{\K}n$ of  (\textsc{sys})
to computing \emph{polynomial solutions} $Z \in \CV{\A}n$ of a related system.

\begin{definition}[Content Bound]\label{def:CB}
We say that $B \in \K$ is a global \emph{content bound} for (\textsc{sys})
if all of its \emph{rational solutions}  $Y \in \CV{\K}n$ are in $B \cdot \CV{\A}n$.
The denominator $d := {\rm den}(B) \in \A \setminus \{0\}$ is then a \emph{denominator bound},
which means all rational solutions are in $\frac1d \cdot \CV{\A}n$. \\
A vector $(B_1,\ldots,B_n)^t \in \K^n$ is a component-wise \emph{content bound}
if all rational solutions are in $B \cdot \CV{\A}n$ where $B = \diag(B_1,\ldots,B_n)$.
\end{definition}

Note that $0$ is a content bound if and only if there are no non-zero rational solutions.
Content bounds and denominator bounds are found in
\cite{Abramov1995},
\cite{AbramovBarkatou1998}, \cite{Barkatou1999},
\cite{AbramovKhmelnov2012}, \cite{CPS:08}, 
\cite{SchneiderMiddeke2017}, or \cite{Middeke2017}. 

If a content bound $B$ is an invertible scalar or matrix, then we can substitute $Y = B  Z$ in
$\tau(Y) = M Y$ obtaining the equivalent system
\begin{math}
  \tau(Z) = \tau(B^{-1}) M B \, Z
\end{math}
for which all rational solutions are in $\CV{\A}n$.
This way, a denominator or a content bound \emph{reduces rational solutions} $Y \in \CV{\K}n$  \emph{to polynomial solutions} $Z \in \CV{\A}n$.
However, as illustrated in \cite{GheffarAbramov}, there is tension between two goals:
(1) we want a bound that can be computed quickly, and (2) want to minimize the degrees of the entries of $Z$.
The goal in this paper is to strike a good balance between these two goals.

We will formulate our bounds in a fairly general setting, see \autoref{sec:notation} below, though the practical utility 
is mainly for cases that have algorithms for polynomial solutions.

\section{Preliminaries}\label{sec:notation}
For a ring $R$ we will use $R^*$ to denote the group of
units in $R$. The set of $m$-by-$n$ matrices with
entries in $R$ will be written as $\Mat{R}mn$. We use $\MatGr{R}n$ for
the set of $n$-by-$n$ invertible matrices over $R$, while
$A^t$ denotes the transpose of $A$.
For $a_1,\ldots,a_n \in R$ let
$\diag(a_1, \ldots, a_n) \in \Mat{R}nn$ denote the corresponding diagonal matrix.

Let $\A$ be a \UFD\ with quotient field $\K$.
Let $(\A, \tau)$ be a \emph{difference ring};
\ie, $\tau\colon\A\to\A$ is an automorphism.
Extending $\tau$ to $\K$ makes
$(\K,\tau)$ a difference field.

\begin{example}\label{ex:A1}
Let $F$ be a field of characteristic 0.
 The main example is
  $\A = F[x]$ with $\tau$ defined by
  $\tau(f(x)) := f(x + 1)$. This is called the \emph{shift case}. Here $\K = F(x)$.
\end{example}

\begin{example}\label{ex:A2}
  Similarly, if $F$ is a field and $q \in F^*$, we can let $\A = F[x]$
  and $\tau(f(x)) := f(qx)$. This is called the \emph{$q$-shift case}.
\end{example}

\begin{example}\label{ex:A3}
  Let $(\G,\tau)$ be a difference ring and let
  $x_1,\ldots,x_s$ be indeterminates over $\G$. Choose units
  $\alpha_1,\ldots,\alpha_s \in \G^*$ and
  $\beta_1,\ldots,\beta_s \in \G$. Let $\A = \G[x_1,\ldots,x_s]$ and
  extend $\tau$ to $\A$ by
  \begin{math}
    \tau(x_j) = \alpha_j x_j + \beta_j.
  \end{math}
  If $\tau|_\G = \id$ is the identity map, then we
  refer to this as the \emph{multi-basic case}.
\end{example}

\begin{definition}[Sharpness]\label{def:sharp}
  Given two content bounds $B, B'$ for the same system
  $\tau(Y) = M Y$, we say that $B$ is \emph{sharper} than $B'$ if it constrains $Y$ to a smaller set
(i.e.  $B \cdot \A^n \subsetneq B' \cdot \A^n$).
\end{definition}

\begin{example}\label{ex:sharp}
For the shift case $\A = \Q[x]$ and $\tau: x \mapsto x+1$ from \autoref{ex:A1},  let
  \begin{equation*}
    M =
    \begin{pmatrix}
      \tfrac{(x+2)^2 (2 x + 1)}{2 (x+1)^2 (x+3)} &
      \tfrac{-(x+2)^2}{2 x (x+1)^2 (x+3)} \\
      \tfrac{-(x+2)^2}{2 (x+1) (x+3)} &  
      \tfrac{(x+2)^2 (2 x + 1)}{2 x (x+1) (x+3)} 
    \end{pmatrix}
    \in \MatGr{\Q(x)}2.
  \end{equation*}
The rational solutions of $\tau(Y) = M Y$ are
  \begin{equation*}
    V =
    \Bigl\{\,
    \begin{pmatrix}
      \tfrac{(x + 1)(c_1 + c_2 x)}{x (x+2)} \\
      \tfrac{(x + 1)(c_1 - c_2 x)}{x + 2}
    \end{pmatrix}
    \;\Bigm|\;
    c_1, c_2 \in \Q
    \,\Bigr\}
    \subseteq \CV{\Q(x)}2.
  \end{equation*}
  Then $V \subset B \cdot  \CV{\A}2 \subsetneq B' \cdot  \CV{\A}2$  where
  \begin{equation*}
   	B = \frac{x+1}{x(x+2)} \ \ \ {\rm and} \ \ \    B' = \frac1{x(x+2)}.
  \end{equation*}
Here $B$ is a sharper content bound than $B'$.
The component-wise bound
  \begin{equation*}
    C :=
    \begin{pmatrix}
      \tfrac{x+1}{x (x+2)} \\
      \tfrac{x+1}{x+2}
    \end{pmatrix}
    \in \CV{\Q(x)}2
  \end{equation*}
  is sharper still since $V \subset {\rm diag}(C) \CV{\A}2 \subsetneq B \cdot  \CV{\A}2$.
\end{example}

Denominator bounds are more common than content bounds
in the literature (see, \eg,  \cite{Abramov1995},  \cite{AbramovBarkatou1998},
\cite{Barkatou1999}, \cite{AbramovKhmelnov2012}, \cite{CPS:08},
\cite{SchneiderMiddeke2017}, or \cite{Middeke2017}).
If $d$ is a denominator bound, then $1/d$ is a content bound.  However, \autoref{ex:sharp} shows
that a sharp global content bound $B$ need not have that form.

\section{The Exponent Function}\label{sec:ef}

Let $p \in \A$ be a prime (= an irreducible polynomial if $\A = F[x]$) and $a \in \A$. The \emph{valuation} of $a$ at $p$ is
\begin{equation*}
  \val{p}(a)
  = \sup \{ j  \mid p^j \text{ divides } a\}.
\end{equation*}
Note that $\val{p}(a) = \infty$ if and only if $a=0$.
We extend $\val{p}: \K \rightarrow \Z \bigcup \{\infty\}$ by defining
$\val{p}(a/b) = \val{p}(a) - \val{p}(b)$ for fractions $a/b \in \K$.
Then
  \begin{equation} \label{rem:val}
    \val{p}(a + b) \geq \min \{ \val{p}(a), \val{p}(b) \}
    \qqand
    \val{p}(a b) = \val{p}(a) + \val{p}(b).
  \end{equation}
  for all $a,b \in \K$.
For a matrix $A$ let $\val{p}(A)$ denote the minimum
valuation of its entries. Then
\begin{equation} \label{vMat}
	\val{p}(A B) \geq \val{p}(A) + \val{p}(B)
\end{equation}
for matrices $A,B$ with matching sizes.

\begin{definition}[Associates and Content] \label{assoc}
Two elements $a_1, a_2 \in \K$ are called \emph{associates}, denoted $a_1 \sim a_2$,  if $a_1 = u a_2$ for some unit $u \in \A^*$.
Just like polynomial contents in Gauss' lemma, the \emph{content} $\cont(A) \in \K$ of a matrix
$A \in \Mat{\K}nm$ is defined up to $\sim$ by the following equivalent properties:
\begin{enumerate}
\item $A$ can be written as $\cont(A)$ times a matrix in $\Mat{\A}nm$ whose entries have gcd 1.
\item $\val{p}( \cont(A) ) = \val{p}(A)$ for all primes $p$.
\item $\cont(A) = g/d$ where $d$ is the least common multiple of the denominators in $A$, and $g$ is the gcd of the entries of $dA$.
\end{enumerate}
\end{definition}

An element $B \in \K$ is a \emph{content-bound} for $\tau(Y) = MY$ if and only if
\emph{$ \val{p}(B) \leq \val{p}(Y) $}
for all solutions $Y \in \K^n$ and all primes $p$ in $\A$.
So \emph{finding $B$ means finding a lower bound for each $\val{p}(Y)$}.

Let
\begin{equation*}
  D = \Bigl\{ a \in \A
  \;\Bigm|\; a \neq 0 \text{ and } \tau^k(a) \sim a  \text{ for some } k \neq 0 \Bigr\}.
\end{equation*}
The fact that $\A$ is a UFD means that every non-zero $a \in \A$ can be written as a product of finitely many primes,
unique up to $\sim$.
This implies that \emph{$a \in D$ if and only if all its prime factors are in $D$}.

We will only compute a lower bound for $\val{p}(Y)$ at 
primes $p \not\in D$.
That results in a content bound up to some factor $a \in D$.
This is sufficient for the main cases including the shift case (then $D = F^*$),
and the $q$-shift case when $q$ is not a root of unity
(then $D = \{ c x^m \, | \, c \in F^*, m \geq 0\}$).

\begin{definition}[Exponent Function]\label{def:ef}
Fix a prime $p \in \A$. If $c \in \K$ then
we define its \emph{exponent function} as: if $c=0$ then $e = \infty$, otherwise $e$
is the function $e: \Z \rightarrow \Z$
with $e(k) = \val{\tau^k(p)}( c )$ for all $k \in \Z$.
\end{definition}

We only use this for primes $p \not\in D$. If $c \neq 0$ then $e$ has finite support and can be represented
with a finite list containing: a lower bound $\ell$ and upper bound $m$ for the support of $e$,
and the numbers $e(k)$ for $k$ from $\ell$ to $m$.

For a system $\tau(Y) = M Y$ we recursively define a matrix
$M_j$ such that \emph{$\tau^j(Y) = M_j Y$}, as follows:
$M_{0} = I$ and
$M_{j+1}  = \tau^j(M) M_j = \tau(M_j) M$.  For $j<0$ we rewrite this as $M_j = \tau^j(M^{-1}) M_{j+1}$. 
Examples include:
\begin{equation*} M_1 = M, \ \ \ \ M_2 = \tau(M) M,  \ \ \ \ M_{-1} = \tau^{-1}(M^{-1}), \ \ \ \ M_{-2} = \tau^{-2}(M^{-1})  \tau^{-1}(M^{-1}). \end{equation*}
After selecting a prime $p \not\in D$, we denote the exponent function of $c_j := \cont(M_j)$ as $e_j: \Z \rightarrow \Z$.

\begin{example}\label{ex:efm} Let $M$ be as in \autoref{ex:sharp}, then $c_1 = \cont(M_1) = x^{-1}(x+1)^{-2}(x+2)^2(x+3)^{-1}$.
The matrix $M_0$ is always $I$ so $c_0 = 1$. From
  \begin{equation*}
    M_{-1} = \tau^{-1}(M^{-1}) =
    \begin{pmatrix}
      \frac{(2 x - 1) x (x + 2)}{2 (x + 1)^2 (x - 1)} &
      \frac{x + 2}{2 (x + 1)^2 (x - 1)} \\
      \frac{x (x + 2)}{2 (x + 1)^2} &
      \frac{(2 x - 1)(x + 2)}{2 (x + 1)^2}
    \end{pmatrix}
  \end{equation*}
  we obtain $c_{-1} = (x-1)^{-1}(x+1)^{-2}(x+2)$.  After selecting $p = x$ we have
  \begin{equation*}
    e_1(k) =
    \begin{cases}
      -1 &\text{if } k = 0\\
      -2 &\text{if } k = 1\\
      \ \,\ 2 &\text{if } k = 2\\
      -1 &\text{if } k = 3\\
      \ \,\ 0 &\text{otherwise}
    \end{cases} \ \ \ \ \ \ \ \ e_0 = 0 \ \ \ \ \ \ \ \ 
    e_{-1}(k) =
      \begin{cases}        
        -1 &\text{if } k = -1 \\
        \ \,\ 0 &\text{if } k =  0 \\
         -2 &\text{if } k = 1 \\
        \ \,\ 1 &\text{if } k = 2 \\
        \ \,\ 0 &\text{otherwise.}
      \end{cases}
  \end{equation*}
\end{example}

\section{The $J$'th global content bound}

\begin{algorithm}[``The global algorithm'': $J$'th global content bound]~
  \begin{description}    
  \item[Input] $M \in \MatGr{\K}n$ and an integer $J \geq 1$.
  \item[Output] $B \in K$ such that $\exists a \in D$ for which $a Y \in B \cdot A^n$ for any rational solution $Y$. In other words, a {\em content bound up to some factor} $a \in D$.
  In the shift-case $a=1$. In the $q$-shift case if $q$ not a root of unity then $a = x^m$ for some $m$ not computed here.
  \newpage
  \item[Procedure]~
    \begin{enumerate}
    \item Compute $M_j$ and $c_j \coloneqq \cont(M_j)$ for $j \in \{-J \ldots J\}$.  \label{Stepa}
    \item \label{stepb} Let ${\cal P}$ be the set of prime factors
    in the denominators of $c_1$ and $c_{-1}$.
    \item \label{stepc} Select one $p \in {\cal P}$ from each \emph{$\tau$-equivalence class},
    where $p_1$ is $\tau$-equivalent to $p_2$ if $\tau^k(p_1) \sim p_2$ for some $k \in \Z$
    (recall $\sim$ from \autoref{assoc}). \\
   Let $\cal{O}$ be the resulting set of primes.
    \item Let $B := 1$. 
    \item  For each $p \in {\cal O} - D$ \label{Stepf}
    \begin{enumerate}
	\item For each $j \in \{-J \ldots J\}$ compute the \emph{exponent-function}
	$e_j: \Z \rightarrow \Z$ of $c_j$ at $p$. Recall that $e_j$ has finite support and $e_j(k) = \val{\tau^k(p)}( c_j )$. \label{e1}
	\item Let $f$ be the output of the \emph{local algorithm}
	in \autoref{sec:local} with input $e_{-J},\ldots,e_{J}$.
	\item If $f = \infty$ then stop and return $B = 0$. Otherwise, $f: \Z \rightarrow \Z$ has finite support and we set
	$B := B \cdot \prod_{k \in \Z}  \tau^k(p)^{f(k)}$.
	\end{enumerate}
    \item Return $B$.
    \end{enumerate}
  \end{description}
\end{algorithm}

The paper \cite{AbramovBarkatou1998} gives a denominator bound that is based
solely on the denominators of $M$ and $M^{-1}$.
That is similar to the above algorithm with $J=1$, and although it can be sharper with $J=1$, see \autoref{ex:alg.local},
its main novelty is when $J > 1$.  Then the local algorithm uses more data, allowing it to construct a sharper bound 
(see the example in \autoref{sec:ex}).
The goal of the local algorithm in \autoref{sec:local} is
to obtain the sharpest
content bound (up to a factor $a \in D$) that can be derived from the exponent-functions $e_{-J},\ldots,e_J$.
In the shift case, that factor $a \in D$ is simply 1.

In the $q$-shift case, if $q$ is a root of unity then $\tau$ has finite order so
$D = \A - \{0\}$ which makes the output trivial. But the root of unity case
is usually excluded.
If $q$ is not a root of unity then $aY  \in B \cdot \CV{F[x]}n$ for some $a  = x^m$ not computed here. Then
the output $B$ restricts rational solutions $Y$ not to $B \cdot \CV{F[x]}n$ but to $B \cdot \CV{F[x, 1/x]}n$.
In the $q$-case, algorithms to bound the degree of polynomial solutions 
can also bound $m$ (just replace $x, q$ with $1/x, 1/q$). So in the $q$-case, finding 
all solutions in $\CV{F[x, 1/x]}n$ is not meaningfully harder than finding 
all solutions in $\CV{F[x]}n$.

In general, $B$ restricts rational solutions to $B \cdot \CV{\locA}{\ n}$ where $\locA := D^{-1} \A \subseteq \K$ is the localization of $\A$ at $D$.
This reduces solutions over $\K$ to solutions over $\locA$.

\section{Local Bounds}\label{sec:local}

Fix one prime $p \not\in D$. A function $f$ is called
a \emph{local content bound} (for $M$ at $p$) if
\begin{equation}\label{DefLCB} \val{ \tau^k(p) }(Y) \geq f(k) \text{   for all solutions } Y \in \K^n \text{  and all  } k \in \Z. \end{equation}
The local algorithm below will compute such $f$  as follow: \autoref{lem:supp} below will provide an initial $f$,
which is then repeatedly improved with  \autoref{lem:ineq}.

For $j \in \{-J,\ldots,J\}$, let $e_j$ be the exponent function of the content $c_j$ of $M_j$.
If $Y$ is a rational solution of $\tau(Y) = MY$ then $\tau^j(Y) = M_j Y$ and from \autoref{vMat} we get
$\val{\tau^{-j}(q)}( Y )  = \val{q}( \tau^j(Y) ) \geq \val{q}(M_j) + \val{q}( Y ) = \val{q}(c_j) + \val{q}( Y )$ for any prime $q$.
For $q = \tau^{k+j}(p)$ we get
\begin{equation} \label{Yval}
	\val{\tau^{k}(p)}( Y ) \geq  \val{q}(c_j) + \val{q}( Y ) = e_j(k+j) + \val{q}( Y ) \geq e_j(k+j) + f(k+j)
\end{equation}
for any local content bound $f$. We have shown:
\begin{lemma}\label{lem:ineq}
Fix some $J>0$.
If $f$ is a local content bound then
$ \val{\tau^{k}(p)}( Y ) \geq  e_j(k+j) + f(k+j)$, so the function $$f_{\rm new}(k) := \max\{ e_j(k+j) + f(k+j) \ \ | \   -J \leq j \leq J\}  \ \ \ \ \  (k \in \Z)$$
is a local content bound as well.
\end{lemma}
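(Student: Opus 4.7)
The plan is to unpack the definitions directly: start from the identity $\tau^j(Y) = M_j Y$, apply the matrix valuation inequality \eqref{vMat} at a cleverly chosen prime, and then rewrite the resulting inequality in the language of the exponent functions $e_j$ and the given local content bound $f$. This essentially formalises the derivation \eqref{Yval} in the paragraph preceding the lemma.

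First I would fix a rational solution $Y \in \K^n$, an integer $k \in \Z$, and an index $j \in \{-J,\ldots,J\}$. Rather than work with $p$ directly, I would feed into \eqref{vMat} the prime $q := \tau^{k+j}(p)$, chosen so that the shift by $\tau^j$ on the left matches $\tau^k(p)$. Applying \eqref{vMat} to $\tau^j(Y) = M_j Y$ at $q$ gives
\begin{equation*}
  \val{q}\bigl(\tau^j(Y)\bigr) \;\geq\; \val{q}(M_j) + \val{q}(Y).
\end{equation*}

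Next I would simplify each piece. On the left, because $\tau$ is an automorphism of $\A$ it permutes primes up to associates, so $\val{q}(\tau^j(Y)) = \val{\tau^{-j}(q)}(Y) = \val{\tau^k(p)}(Y)$. On the right, property (2) of \autoref{assoc} gives $\val{q}(M_j) = \val{q}(c_j) = e_j(k+j)$ by \autoref{def:ef}, and the hypothesis that $f$ is a local content bound, applied at $q = \tau^{k+j}(p)$ via \eqref{DefLCB}, gives $\val{q}(Y) \geq f(k+j)$. Chaining these yields the first claim of the lemma:
\begin{equation*}
  \val{\tau^k(p)}(Y) \;\geq\; e_j(k+j) + f(k+j).
\end{equation*}

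Since this inequality holds for every $j \in \{-J,\ldots,J\}$, taking the maximum on the right preserves it, so $\val{\tau^k(p)}(Y) \geq f_{\rm new}(k)$. As $Y$ and $k$ were arbitrary, $f_{\rm new}$ satisfies \eqref{DefLCB} and is itself a local content bound. The only step that is not pure bookkeeping is the identity $\val{q}(\tau^j(Y)) = \val{\tau^{-j}(q)}(Y)$, which relies on $\tau$ being an automorphism of a UFD (so it sends primes to associates of primes); once that observation is in hand, nothing else presents a real obstacle.
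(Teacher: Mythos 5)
Your proposal is correct and follows essentially the same route as the paper's own derivation (the display labeled \eqref{Yval} preceding the lemma): apply \eqref{vMat} to $\tau^j(Y) = M_j Y$ at the shifted prime $q = \tau^{k+j}(p)$, translate $\val{q}(\tau^j(Y))$ back to $\val{\tau^k(p)}(Y)$ using that $\tau$ permutes primes up to associates, identify $\val{q}(M_j)$ with $e_j(k+j)$ via the content, bound $\val{q}(Y)$ by $f(k+j)$, and take the maximum over $j$. The only difference is that you spell out the automorphism/prime-permutation step more explicitly, which is a sound clarification rather than a new idea.
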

Note that $f_{\rm new}(k) \geq f(k)$ since $f_{\rm new}(k)$ is the maximum of set that contains $e_0(k) + f(k) = f(k)$.
The following picture illustrates for $J=2$ how the lemma uses $2J$ neighbors of $f(k)$ to see if the current
lower bound $f(k)$ for $\val{\tau^{k}(p)}( Y )$ can be improved:
\begin{equation*}
  \begin{tikzpicture}[xscale=1.3, yscale=1.1]
    \fill (-3,0) circle (2pt) node[below] {$f(k-3)$};
    \fill (-2,0) circle (2pt) node[below] {$f(k-2)$};
    \fill (-1,0) circle (2pt) node[below] {$f(k-1)$};
    \fill (0,0) circle (2pt) node[below]  {$f(k)$};
    \fill (1,0) circle (2pt) node[below]  {$f(k+1)$};
    \fill (2,0) circle (2pt) node[below]  {$f(k+2)$};
    \fill (3,0) circle (2pt) node[below]  {$f(k+3)$};
    \node at (-4,0) {$\cdots$};
    \node at (4,0) {$\cdots$};
    \draw[rounded corners, ->] 
    (1,3pt) --
    (1,10pt) -- node[above] {$e_1(k+1)$} 
    (2pt,10pt) -- 
    (2pt,3pt);
    \draw[rounded corners, ->] 
    (2,3pt) --
    (2,33pt) -- node[above] {$e_2(k+2)$} 
    (-2pt,33pt) -- 
    (-2pt,3pt);
    \draw[rounded corners, ->, yshift=-0.2cm] 
    (-1,-13pt) --
    (-1,-23pt) -- node[below] {$e_{-1}(k-1)$} 
    (-2pt,-23pt) -- 
    (-2pt,-13pt);
    \draw[rounded corners, ->, yshift=-0.2cm] 
    (-2,-13pt) --
    (-2,-43pt) -- node[below] {$e_{-2}(k-2)$} 
    (2pt,-43pt) -- 
    (2pt,-13pt);
  \end{tikzpicture}
\end{equation*}

The \emph{support} of $f$ is the set {$\supp(f) = \{ k \in \Z \mid f(k) \neq 0 \}$}.

\begin{lemma}\label{lem:supp}  
Take $\ell_1, m_1, \ell_{-1}, m_{-1} \in \Z$ such
that $\supp(e_1) \subseteq [\ell_1, m_1]$ and  $\supp(e_{-1}) \subseteq [\ell_{-1}, m_{-1}]$.
For every non-zero solution $Y \in \CV{\K}n$ of $\tau(Y) = M Y$, if  $\val{ \tau^k(p) }(Y) \neq 0$ then $k \in [\ell, m]$
  where
  \begin{equation*}
    \ell = \min \{\ell_1, \ell_{-1} + 1\}
    \qqand
    m = \max \{m_1  -1 , m_{-1} \}.
  \end{equation*}
This implies that the function
      $f \colon \Z \to \Z \cup \{-\infty\}$ defined by
      \begin{equation*}
        f(k) =
        \begin{cases}
          -\infty &\text{if } k \in [\ell,m] \\
          \ \ \, \, 0&\text{otherwise.}
        \end{cases}
      \end{equation*}
is a local content bound.
\end{lemma}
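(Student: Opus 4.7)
The plan is to show that the valuation sequence $u_k := \val{\tau^k(p)}(Y)$ has support inside $[\ell, m]$ for every non-zero rational solution $Y \in \CV{\K}n$; the statement about $f$ being a local content bound will then follow at once from \ref{DefLCB}. As a preliminary remark, since each entry of $Y$ lies in $\K$ and $\A$ is a UFD, only finitely many primes $\tau^k(p)$ can divide a numerator or denominator of $Y$, so $u_k = 0$ for $|k|$ sufficiently large.

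The core of the argument consists of two neighbor inequalities, obtained exactly as in the derivation of \ref{Yval} with $j = \pm 1$: applying $\val{\tau^{k+1}(p)}$ to $\tau(Y) = MY$ and $\val{\tau^{k-1}(p)}$ to $\tau^{-1}(Y) = M_{-1}Y$, and using \ref{vMat} together with the identity $\val{\tau(q)}(\tau(a)) = \val{q}(a)$, yields
\begin{equation*}
    u_k \geq e_1(k+1) + u_{k+1}
    \qqand
    u_k \geq e_{-1}(k-1) + u_{k-1}.
\end{equation*}

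Next, for any $k > m$ I would observe that $k \geq m + 1 \geq \max(m_1, m_{-1}+1)$, so $k + 1 > m_1$ and $k > m_{-1}$; hence $e_1(k+1) = 0$ and $e_{-1}(k) = 0$. The first inequality at index $k$ then gives $u_k \geq u_{k+1}$, while the second at index $k+1$ gives $u_{k+1} \geq u_k$, so $u_k = u_{k+1}$. Iterating, the tail $u_{m+1} = u_{m+2} = \ldots$ is constant, and since it is eventually zero, every term vanishes. A symmetric argument for $k < \ell$, noting that $k \leq \ell - 1 \leq \min(\ell_1-1, \ell_{-1})$ forces $e_1(k) = e_{-1}(k-1) = 0$, gives $u_k = 0$ for $k < \ell$. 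This proves that $u_k \neq 0$ forces $k \in [\ell, m]$, and the content-bound property of $f$ follows because $f(k) = -\infty \leq u_k$ is trivial on $[\ell, m]$ and $f(k) = 0 = u_k$ off $[\ell, m]$.

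The main subtlety will be recognizing that the two one-sided inequalities must be paired at \emph{neighboring} indices $k$ and $k+1$ in order to pinch $u_k = u_{k+1}$; taken alone either inequality yields only monotonicity on one side, which would not suffice to propagate the vanishing at infinity back to $m + 1$ or $\ell - 1$.
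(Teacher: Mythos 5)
Your proof is correct and follows essentially the same route as the paper's: derive the two neighbor inequalities from \ref{Yval} with $j=\pm1$, use vanishing of $e_{\pm1}$ outside $[\ell_{\pm1},m_{\pm1}]$, and propagate the eventual vanishing of $u_k$ inward to $k=m+1$ and $k=\ell-1$. The only difference is bookkeeping: you pinch $u_k = u_{k+1}$ at adjacent indices, whereas the paper chains each one-sided inequality separately to obtain $f(k)\geq 0$ and $f(k)\leq 0$ and then intersects; your closing remark that pairing at neighbors is \emph{the} subtlety slightly overstates this, since the paper's separate chaining shows the pinching is not strictly necessary, only that both inequalities are.
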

\begin{proof}
If there are no non-zero solutions then there is nothing to prove. So
let $Y$ be a generic non-zero solution and let $f(k) = \val{\tau^k(p)}(Y)$.
Recall from \autoref{Yval}\ that $\val{\tau^k(p)}(Y) \geq e_j(k+j) + \val{q}(Y)$ where $q$ was $\tau^{k+j}(p)$, in other words, $$f(k) \geq e_j(k+j) + f(k+j).$$
Since $e_j(k+j) = 0$ when $j=1$ and $k+1 > m_1$ we find $f(k) \geq f(k+1)$ for all $k > m_1 - 1$. For such $k$ we have $f(k) \geq f(k+1) \geq f(k+2) \geq \cdots \geq 0$ since $f$ has finite support. \\
Since $e_j(k+j) = 0$ when $j=-1$ and $k-1 > m_{-1}$ we find $f(k) \geq f(k-1)$ and thus $f(k-1) \leq f(k)$ for all $k -1 > m_{-1}$.  Then 
$f(k) \leq f(k+1) \leq \cdots \leq 0$ for all $k > m_{-1}$. \\[5pt]
Thus $f(k) = 0$ for all $k > m$. The proof for $\ell$ is similar: \\[5pt]
%
%
$f(k) \geq f(k+1)$ for all $k+1 < \ell_1$. Then $0 \geq \cdots \geq f(k-1) \geq f(k)$ for all $k < \ell_1$. \\
$f(k) \geq f(k-1)$ for all $k -1 < \ell_{-1}$.  Then $f(k) \geq 0$ for all $k < \ell_{-1} + 1$.
\end{proof}

\begin{algorithm}[``The local algorithm'':  $J$\ordinal{th} local content bound]\label{alg:local}~
  \begin{description}    
  \item[Input] The exponent-functions $e_{-J},\ldots,e_J$
  from step \autoref{e1} in the global algorithm.
  \item[Output] A local content bound $f\colon \Z \to \Z$ \wrt\ $p$,
    or $\infty$ if it is discovered that there can be no non-zero rational solutions.
  \item[Procedure]~
    \begin{enumerate}[ref=step~(\alph*)]
     \item\label{alg:bounds.2} Let $\ell, m$ and $f$ be as in \autoref{lem:supp}.
    \item\label{alg:bounds.5} Repeat:
      \begin{enumerate}[leftmargin=1.5cm,ref=step~(\alph{enumi}.\arabic*)]
      \item Let $f_{\rm new} \colon \Z \to \Z \cup \{-\infty\}$ be the function given in \autoref{lem:ineq}.
      \item\label{alg:bounds.52} If $f(k) > 0$ for any
        $k \not\in [\ell,m]$ then stop and return $\infty$. \label{b2}
        \item If $f = f_{\rm new}$ then stop and return $f$. \\
        Otherwise set $f := f_{\rm new}$ and Repeat. \label{b3}
      \end{enumerate}
    \end{enumerate}
  \end{description}
\end{algorithm}

\begin{example}\label{ex:alg.local} Let $J=1$ and $p=x$.
\autoref{ex:efm}, which continued \autoref{ex:sharp}, computed
  \begin{equation*}    
    \begin{array}{r|*{9}{r}}
  k   \hspace{3pt}     & \ldots & -2 & -1 & 0  & 1  & \ \ 2 & 3  & \ \ 4 & \ldots \\
      \hline
     e_{-1}(k) & \ldots & 0  & -1 & 0  & -2 & 1 & 0  & 0 & \ldots \\
      e_1(k)  & \ldots & 0  & 0  & -1 & -2 & 2 & -1 & 0 & \ldots
    \end{array}
  \end{equation*}
We do not list $e_0$ since that is always 0.  Then.  
  \begin{equation*}
    \ell_{-1} = -1, \qquad
    \ell_1 = 0, \qqand
    \ell = \min \{\ell_1, \ell_{-1} + 1\} = 0
  \end{equation*}
  and
  \begin{equation*}
    m_{-1} = 2, \qquad
    m_1 = 3, \qqand
    m = \max \{m_1 -1, m_{-1} \} = 2.
  \end{equation*}
  In the algorithm $f: \Z \rightarrow \Z \bigcup \{-\infty\}$ successively becomes
  \begin{equation*}
    \begin{array}{r|*{9}{r}}
      k     \hspace{3pt}     & \ldots & -2 & -1 & 0 & 1  & 2 & 3 & \ \ \, 4 & \ldots \\
      \hline
      f(k) & \ldots & 0 & 0 & -\infty & -\infty & -\infty & 0 & 0 & \ldots \\
      f(k) & \ldots & 0 & 0 & -1 & -\infty & -1 & 0 & 0 & \ldots \\
      f(k) & \ldots & 0 & 0 & -1 & 1 & -1 & 0 & 0 & \ldots
    \end{array}
  \end{equation*}
At that point $f$ stabilizes ($f_{\rm new} = f$) and the local algorithm returns $f$. The global algorithm
converts $f$ to this content bound
\begin{equation*} B = \frac{x+1}{x(x+2)} \end{equation*}
which is sharper than the denominator bound $d = x^2(x+1)(x+2)$ from algorithm UniversalDenominator in Maple, which implements \cite{AbramovBarkatou1998}.
\end{example}

\begin{theorem}\label{thm:alg.local}
 \autoref{alg:local} is correct and terminates.
\end{theorem}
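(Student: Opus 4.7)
The plan is to handle correctness and termination separately, since they involve quite different ideas.

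For correctness, I would argue by induction on the iterations of the repeat-loop that the current value of $f$ is always a local content bound. The base case is the initialization in step (a), which is exactly the conclusion of \autoref{lem:supp}. For the inductive step, if $f$ is a local content bound, then $f_{\rm new}$ from \autoref{lem:ineq} is also a local content bound, so replacing $f$ by $f_{\rm new}$ preserves the invariant. Consequently, whenever the algorithm returns $f$ in \autoref{b3}, the output is a local content bound. The remaining case is the return of $\infty$ in \autoref{b2}: here some $f(k)>0$ with $k\notin[\ell,m]$, but by \autoref{lem:supp} any non-zero rational solution $Y$ would have to satisfy both $\val{\tau^k(p)}(Y)\ge f(k)>0$ and $\val{\tau^k(p)}(Y)=0$, which is impossible; so there is no non-zero solution and returning $\infty$ is justified.

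For termination, the first observation is that the update is pointwise non-decreasing: since $e_0\equiv 0$, the $j=0$ term in the maximum defining $f_{\rm new}(k)$ contributes $e_0(k)+f(k)=f(k)$, so $f_{\rm new}(k)\ge f(k)$. Combined with the initial values, this means that outside $[\ell,m]$ the value $f(k)$ never drops below $0$; and, as long as the algorithm has not returned $\infty$, it never exceeds $0$ there either, so $f(k)=0$ for all $k\notin[\ell,m]$ throughout. Inside $[\ell,m]$ the values lie in $\Z\cup\{-\infty\}$ and become finite integers after the very first iteration (because each $e_j$ takes integer values and the boundary values $f(k+j)=0$ contribute to the max). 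So the iteration produces a non-decreasing sequence of integer-valued tuples on the finite set $[\ell,m]$, and the algorithm terminates precisely when this sequence is bounded above.

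The main obstacle is showing that \emph{any} unbounded growth inside $[\ell,m]$ forces the algorithm to return $\infty$, rather than looping forever. The key is a ``spreading'' argument: suppose $f(k_0)\to\infty$ for some $k_0\in[\ell,m]$. For every $k$ with $|k-k_0|\le J$, the update gives
\begin{equation*}
  f_{\rm new}(k)\;\ge\; e_{k_0-k}(k_0)+f(k_0),
\end{equation*}
and since $e_{k_0-k}(k_0)$ is a fixed finite integer, $f(k)$ also grows without bound. Iterating this observation, the set of indices where $f$ is unbounded is closed under shifts by at most $J$, and since every $e_j$ takes finite values everywhere, the growth eventually propagates to indices $k\notin[\ell,m]$. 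At that point $f(k)>0$ for some such $k$ and \autoref{b2} triggers the return of $\infty$. Thus each run of the algorithm either stabilizes (and returns a local content bound by \autoref{b3}) or reaches the early-exit condition in \autoref{b2}; in either case it terminates.
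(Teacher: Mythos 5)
Your proof is correct and takes essentially the same route as the paper: establish the loop invariant that $f$ is always a local content bound (via \autoref{lem:supp} for initialization and \autoref{lem:ineq} for the update), and prove termination by noting that $f$ is pointwise non-decreasing with support confined to $[\ell,m]$, so the only way to run forever is unbounded growth, which your spreading argument propagates to an index outside $[\ell,m]$ and thus triggers the early exit. The paper does the propagation with the single step $f_{\rm new}(k+1)\ge e_{-1}(k)+f(k)$, whereas you propagate by any shift $|j|\le J$; this is cosmetically more symmetric but the same idea.

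One small inaccuracy worth fixing: you claim the $-\infty$'s disappear ``after the very first iteration because the boundary values $f(k+j)=0$ contribute to the max.'' That only holds for $k$ within distance $J$ of the boundary; for $k$ deep inside $[\ell,m]$ (when $m-\ell>2J$) all $f(k+j)$ with $|j|\le J$ are still $-\infty$ after one pass, so $f_{\rm new}(k)=-\infty$. The correct statement is that the $-\infty$ region shrinks by at least one index per iteration from each end, so it vanishes after finitely many iterations. This is harmless for your argument — monotonicity and the finite support range are all you actually use — but the claim as written is false.
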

\begin{proof}
Throughout the algorithm $f$ is a local content bound by Lemmas \ref{lem:ineq} and \ref{lem:supp}.
If \autoref{b2} returns $\infty$ then this is correct by \autoref{lem:supp}. Otherwise
the support of $f$ stays inside a finite range $[\ell,m]$. 
As long as $f(k) = -\infty$ for some $k$ we get $f_{\rm new} \neq f$.
So all $f(k)$ are in $\Z$ before the algorithm can terminate in \autoref{b3}.
Since no $f(k)$ ever decreases and the support is bounded, it follows that either (a) the algorithm terminates after finitely many steps, or (b)
some $f(k)$ grows without bound.
Option (b) leads to a contradiction, because if $f(k)$ grows without bound, then so does $f(k+1)$ since $f_{\rm new}(k+1) \geq e_{-1}(k) + f(k)$.
Then $f(k+1), f(k+2),\ldots$ must also grow without bound, which contradicts the fact that the support of $f$ stays inside  $[\ell,m]$.
\end{proof}
The global algorithm only needs to consider primes in $c_1$ or $c_{-1}$, 
otherwise $f$ in \autoref{lem:supp} would be 0.
Correctness of the global algorithm
follows from \autoref{thm:alg.local}.

%


\section{Component-wise Bounds}
We give $\Zinfty := \Z \bigcup \{\infty\} $ the structure of a \emph{tropical semi-ring}
$(\Zinfty, \oplus, \otimes)$ with $\oplus = \min$ and $\otimes = +$. \ \ We extend this to matrices.
If $A  \in \Mat{\Zinfty}mn$ and $B  \in \Mat{\Zinfty}n\ell$ then the $ij$'th entry of $A \otimes B$ is
$$ (A \otimes B)_{ij} \ := \ \bigoplus_{k=1}^n A_{ik} \otimes B_{kj}  \ :=  \ \min\{A_{ik} + B_{kj} \mid 1 \leq k \leq n \}.$$

If $p$ is a prime and $A \in \Mat{\K}mn$ then \emph{$V_p(A) \in \Mat{\Zinfty}mn$} denotes the matrix whose $ij$'th entry is $\val{p}( A_{ij} )$.
The smallest entry is $\val{p}(A)$.
\autoref{rem:val} implies:
  \begin{equation}
  \label{lem:EF.prod}
    V_p(A B) \geq V_p(A) \otimes V_p(B)
  \end{equation}
 for all $A \in \Mat{\K}mn$ and $B \in \Mat{\K}n\ell$, where the inequality is interpreted for each entry separately.

\begin{example}\label{ex:cw}
Let $\A = \Q[x]$, $p=x$ and
  \begin{equation*}
    M =
    \begin{pmatrix}
      -1 + x^3 & -x^2 + x^3 & x \\
      0        & x          & 1 \\
      x + x^2  & x          & 0
    \end{pmatrix}
  \qqand
  Y =
  \begin{pmatrix}
    x \\ -x \\ 1
  \end{pmatrix}.
\end{equation*}
Then
\begin{equation*}
  V_p(M) =
  \begin{pmatrix}
    0      & 2 & 1      \\
    \infty & 1 & 0      \\
    1      & 1 & \infty \\
  \end{pmatrix}
  \qqand
  V_p(Y) = 
  \begin{pmatrix}
    1 \\ 1 \\ 0
  \end{pmatrix}.
\end{equation*}
Lets check \autoref{lem:EF.prod} for $M$ and $Y$:
\begin{equation*}
  \begin{pmatrix}
    3 \\ 0 \\ 3
  \end{pmatrix}
  =
  V_p(\begin{pmatrix} 
      x^3 \\ 1 - x^2 \\ x^3
    \end{pmatrix})
  =
  V_p(M Y)
  \geq
  V_p(M) \otimes V_p(Y)
  =
  \begin{pmatrix}
    \min \{0 + 1, 2 + 1, 1 + 0\} \, \, \\ 
    \min \{\infty + 1, 1 + 1, 0 + 0\} \\ 
    \min \{1 + 1, 1 + 1, \infty + 0\}
  \end{pmatrix}
  =
  \begin{pmatrix}
    1 \\ 0 \\ 2
  \end{pmatrix}.
\end{equation*}
\end{example}

\newpage 

\begin{algorithm}[$J$'th component-wise content bound] \hfill

  \begin{description} 
  \item[Input] $M \in \MatGr{\K}n$ and $J \geq 1$.
  \item[Output] $B \in \K^n$ such that $\exists a \in D$ with $a Y \in \diag(B) A^n$ for any rational solution $Y$.
  \item[Procedure]~
    \begin{enumerate}
    \item Compute $M_j$ for $j \in \{-J \ldots J\}$.
    \item Let ${\cal P}$ be the set of prime factors in the denominators in $M$ and $M_{-1}$.
    \item  $\cal{O}$ := select one $p \in {\cal P}$ from each $\tau$-equivalence class.
    \item Let $B_i := 1$ for $i \in \{1,\ldots,n\}$.
    \item  For each $p \in {\cal O} - D$
    \begin{enumerate}
	\item For $j \in \{-J \ldots J\}$, compute the \emph{exponent-function} $E_j$ of $M_j$ at $p$, which is a function $E_j: \Z \rightarrow \Mat{\Zinfty}nn$ where
	$E_j(k) := V_{\tau^k(p)}(M_j)$.
	\item Call the \emph{local algorithm} below with input $E_{-J} \ldots E_J$.
	\item It returned a function $F: \Z \rightarrow \Zinfty^n$. For $i \in \{1 \ldots n\}$:
	 If $F_i$ (the $i$'th component of $F$) is $\infty$ then $B_i := 0$, otherwise $B_i := B_i \cdot \prod_{k \in \Z}  \tau^k(p)^{F_i(k)}$.
	\end{enumerate}
    \item Return $(B_1,\ldots,B_n)^t$.
    \end{enumerate}
  \end{description}
\end{algorithm}

If an entry of $M_j$ is zero, then the corresponding entry of $E_j(k)$ is $\infty$ for all $k \in \Z$. 
To obtain a finite ``support'', we define \emph{$\supp{(E_j)}$} as the set of all $k \in \Z$ for which $E_j(k) \not\in \Mat{\{0,\infty\}}nn$.
This way we can represent $E_j$ in finite terms with: integers $\ell_j, m_j$ such that $\supp{(E_j)} \subseteq [\ell_j, m_j]$, matrices $E_j(k) \in \Mat{\Zinfty}nn$ for $k \in [\ell_j, m_j]$, and
a matrix we denote as $E_j(\infty) \in \Mat{\{0,\infty\}}nn$ such that $E_j(k) = E_j(\infty)$ for all $k \not\in  [\ell_j, m_j]$.

\begin{algorithm}[$J$\ordinal{th}\ local component-wise content bound]\label{alg:CW.local}~
  \begin{description}
  \item[Input:] $E_{-J},\ldots,E_J$. 
  \item[Output:] $F \colon \Z \to \CV{\Zinfty}n$ such that
    \begin{math}
      F(k) \leq V_{\tau^k(p)}(Y)
    \end{math}
    for all $k \in \Z$ and rational solutions $Y$.
      \item[Procedure:]~
    \begin{enumerate}
    \item
Let $\ell, m$ be as in \autoref{lem:supp},
      let $c=0$  and  let
      $F \colon \Z \to \CV{(\Zinfty \cup \{ - \infty\})}n$
      \begin{equation*}
        F(k) :=
        \begin{cases}
          (-\infty, \ldots, -\infty)^t &\text{if } \ell \leq k \leq m \\
          (0, \ldots, 0)^t &\text{otherwise}.
        \end{cases}
      \end{equation*}
    \item Repeat:
    \begin{enumerate}
    \item  $F_{\rm new}(k) := \max \{ E_j( k+j ) \otimes F(k+j) \mid -J \leq j \leq J \}$  \ \ (for all $k \in \Z)$
    where the maxima are taken component-wise.
    \item If $F_{\rm new} = F$ then stop and return $F$.
    \item \label{c} If all negative entries of $F$ and $F_{\rm new}$ are the same, then $c := c+1$. \\
    If $c > 10$ then return $F_{\rm new}$.
    (For alternatives see \autoref{soph}.)
\item Let $F := F_{\rm new}$ and Repeat.


\end{enumerate}
    \end{enumerate}
  \end{description}
\end{algorithm}

\begin{theorem}\label{thm:alg.CW.local}
  \autoref{alg:CW.local} is correct and terminates.
\end{theorem}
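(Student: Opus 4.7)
The plan is to mirror the proof of \autoref{thm:alg.local}, with the observation that the contradiction argument used there must be replaced by the new stopping heuristic in step (c).

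For correctness, I would maintain the invariant that $F$ is a local component-wise content bound throughout the iteration, \ie\ $F(k) \leq V_{\tau^k(p)}(Y)$ component-wise for every rational solution $Y$ and every $k \in \Z$. The initialisation is valid because \autoref{lem:supp} forces $\val{\tau^k(p)}(Y) = 0$ outside $[\ell, m]$, which makes every component of $V_{\tau^k(p)}(Y)$ non-negative, while inside $[\ell, m]$ the value $(-\infty, \ldots, -\infty)^t$ is vacuously a lower bound. For the update step, applying \autoref{lem:EF.prod} to the identity $\tau^j(Y) = M_j Y$ at the prime $q = \tau^{k+j}(p)$ yields
\begin{equation*}
  V_{\tau^k(p)}(Y) \;=\; V_{q}(\tau^j(Y)) \;\geq\; E_j(k+j) \otimes V_{q}(Y) \;\geq\; E_j(k+j) \otimes F(k+j).
\end{equation*}
Taking the component-wise maximum over $-J \leq j \leq J$ gives $V_{\tau^k(p)}(Y) \geq F_{\rm new}(k)$, so $F_{\rm new}$ is again a local content bound. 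Hence every value returned by the algorithm satisfies the specification.

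For termination, I would first establish monotonicity of $F$: the matrix $E_0(k) = V_{\tau^k(p)}(I_n)$ is the tropical identity (zeros on the diagonal, $\infty$ off it), so the $j=0$ term in the maximum contributes $F(k)$ itself and $F_{\rm new}(k) \geq F(k)$ component-wise at every iteration. Consequently the set $\{(i,k) \mid F_i(k) = -\infty\}$ is monotonically non-increasing; since it is contained in the finite set $\{1, \ldots, n\} \times [\ell, m]$, it stabilises after finitely many iterations. The remaining negative entries take values in $\Z_{<0}$, can only increase, and are bounded above by $0$; as monotone integer sequences bounded above they must stabilise (or leave the negative range) in finitely many further iterations. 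Once every negative entry is frozen, the test in step (c) succeeds at each iteration, so $c$ reaches $11$ within eleven more updates and the algorithm returns.

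The main obstacle is that, unlike in the scalar setting of \autoref{thm:alg.local}, positive entries of $F$ may genuinely grow without bound, for instance when some component $Y_i$ of every rational solution is forced to vanish. The scalar contradiction coming from $f_{\rm new}(k+1) \geq e_{-1}(k) + f(k)$ propagating growth along the bounded support no longer closes, because the tropical product mixes values across different components rather than shifting a single scalar. The $c > 10$ safeguard is precisely what rescues termination, at the cost that the returned $F_{\rm new}$ is guaranteed to be a valid lower bound by the invariant above but need not be the sharpest bound derivable from $E_{-J}, \ldots, E_J$, which matches the theorem's statement.
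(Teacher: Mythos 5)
Your proof follows the same approach as the paper's: correctness is carried by the invariant that $F$ stays a local component-wise content bound (initialisation via \autoref{lem:supp}, update via \autoref{lem:EF.prod}), monotonicity comes from the $j=0$ term, and termination rests on monotonicity of the finitely many negative entries together with the $c>10$ cut-off; your explanation of why the scalar propagation contradiction from \autoref{thm:alg.local} no longer applies matches the discussion in \autoref{soph}. One detail the paper's proof asserts but yours does not establish: the output is specified as $F\colon\Z\to\CV{\Zinfty}n$, so one must rule out the algorithm returning while $-\infty$ entries remain. Your termination argument explicitly allows the $-\infty$ set to stabilise at a nonempty set, in which case step (c) would eventually trigger on an ill-typed $F$. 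To close this, observe that the $-\infty$ set strictly shrinks on every iteration while it is nonempty: taking $k^*$ maximal with $F_i(k^*)=-\infty$ for some $i$, the vector $F(k^*+J)$ is entirely finite, and since $M_J$ is invertible each row of $E_J(k^*+J)$ has a finite entry, so $E_J(k^*+J)\otimes F(k^*+J)$ is finite in every component and $F_{{\rm new},i}(k^*)$ becomes finite. Hence some negative entry changes on each such iteration, $c$ cannot accumulate, and all $-\infty$ entries vanish before the cut-off can fire.
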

\begin{proof}
As in \autoref{sec:local}, entries can not decrease and the algorithm does not stop if any entries $= -\infty$ remain. Apart from
replacing scalars with matrices and vectors, correctness is proved in the same way as well.
As for termination, negative entries can only increase finitely many times, which makes $c$ in step \autoref{c} a
simple termination mechanism. For more sophisticated versions, see \autoref{soph} below.
\end{proof}

\subsection{Alternatives to an arbitrary cut-off} \label{soph}
The question in this subsection is how to 
ensure termination
without an arbitrary cut-off counter $c$ in step \autoref{c}.
We sketch one approach with an example, and an alternative that is easier to implement.

Let
$M = \left( \begin{array}{cc} x & 0 \\ 0 & 1\end{array} \right)$, take $p = x$, and let $P_n = \tau^{-1}(p)\cdots \tau^{-n}(p) = (x-1)\cdots(x-n)$.
Up to constants, the only rational solution of $\tau(Y) = MY$ is $(0,1)^t$.
Now $(P_n, 1)^t$ is a valid content bound for any $n$ since $Y_1 = 0$ is divisible by any $P_n$.
In every loop, \autoref{alg:CW.local} constructs an $F_{\rm new}$ that is strictly sharper than $F$
(if $F$ encodes $(P_n,1)^t$ then $F_{\rm new}$ encodes $(P_{n+J},1)^t$).
So if we remove step \autoref{c} without implementing an alternative,
then the algorithm will not terminate for $M$.


During the computation $F$ looks as follows. Since $M$ is a 2 by 2 matrix, $F$ has two
components $F_1$ and $F_2$, each of which is a function $\Z \rightarrow \Zinfty \bigcup \{-\infty\}$.
After the first loop $F_2$ is identically 0, while $F_1$ looks like this
$\ldots, 0, 0, 1, \ldots 1, 0, 0, \ldots$ which encodes $P_n$ where $n$ is the number of $1$'s.
This $n$ increases by $J$ in each loop.

We now sketch the first approach to ensure termination without an arbitrary cut-off.
Outside a finite range of $k$'s,
the matrices $E_j(k)$ are constant (recall $E_j(\infty)$ right before \autoref{alg:CW.local}).
If a sufficiently long repeating pattern of positive entries in $F(k)$'s outside of this range
forms during the computation, then, since the $E_j(k)$ are constant here, it is not hard for the
algorithm to prove that this pattern will continue indefinitely.
In the example, when at least $n=1$ positive entries
have formed outside this finite range, then one can immediately deduce from $E_1(\infty)$ that
this pattern can only grow in each loop. But that means that $Y_1$, the first entry of $Y$, must be divisible by a polynomial $P_n$
whose degree keeps increasing. That implies $Y_1 = 0$, so we can replace $F_1$
by the function that is identically $+\infty$.
With this strategy, only finitely many entries $\not\in \{0, \infty\}$ can occur,
because if more than a bounded number appear, the algorithm can construct a proof from $E_{-J}(\infty)\ldots E_{J}(\infty)$
that the pattern will continue, allowing it to replace a component of $F$ by $+\infty$.

We decided not to spell out the details of this approach, because there is a simpler approach which accomplishes a similar outcome.
Let the degree of a rational function be the degree
of the numerator minus the degree of the denominator. To compute rational solutions $Y$, one needs to compute a \emph{degree-bound}
for the entries of $Y$.
For instance, if $Y_1$ is a polynomial of degree $\leq 3$, then the information that $Y_1$ is divisible by $P_4$ is equivalent to
the ``sharper'' bound that $Y_1$ is divisible by $P_{10}$, since both imply $Y_1 = 0$.
So one can design a version of \autoref{alg:CW.local} where the arbitrary cut-off $c>10$ 
is replaced with a cut-off informed by a degree-bound.

Among these alternatives, while the arbitrary cut-off approach is the least elegant, we presented it as the default
because it takes the least amount of implementation effort, and its practical performance, except in very rare cases,
will likely be the same as the alternatives sketched in this subsection.

%
%
%

\section{Example, an eigenring system} \label{sec:ex}
To factor an operator $L = \tau^2 + a_1 \tau + a_0 \in \Q(x)[\tau]$ with the eigenring \cite{eigenring, Barkatou1999} method
we need rational solutions for the system
\begin{equation*} \tau(Y) = MY \ \ \ {\rm where} \ \ \ M =  \left(\begin{array}{cccc} 
0 & 0 & 0 & 1 \\
0 & 0 & -b & -a_1b \\
0 & -a_0 & 0 & -a_1 \\
a_0b & a_0 a_1b & a_1b & a_1^2b \end{array} \right) 
\ \ \ {\rm with} \ \ \ b = \frac1{\tau(a_0)}.
\end{equation*}
For our example\footnote{If an operator $L$ is the LCLM (Least Common Left Multiple) of smaller operators
then $L$ can be factored with the eigenring method.
This example was constructed as LCLM($\tau-x(x+3)/(x+1), \ \tau-(x+1)/x)$. This construction ensures
that $M$ will have at least two (exactly two here) independent rational solutions.} let
\begin{equation*} a_0 = \frac{x^2(x+3)(x^2+5x+5)}{ (x+2)(x-1)(x^2+3x+1) } \ \ \ {\rm and} \ \ \ a_1 = \frac{-(x+1) (x^4+7x^3+11x^2-4x-4)}{(x+2)(x-1)(x^2+3x+1)}.
\end{equation*}
The \emph{global} content bounds for $J \leq 4$ are:
\begin{eqnarray*} B^{\rm global}_{J=1} & = & \frac1{(x-1)x^4(x+1)^3(x+2)(x+3)p q} \\
B^{\rm global}_{J=2} & =&  \frac1{(x-1)x^2(x+1)(x+2)(x+3)p q} \\
B^{\rm global}_{J=3} & =&   \frac1{(x-1)x^2(x+2)(x+3)p q} \\
B^{\rm global}_{J=4} & = & \frac1{(x-1)x^2(x+3)p q} \end{eqnarray*}
where $p = x^2 + 3x + 1$ and $q = \tau(p)$.
The bound from  \cite{AbramovBarkatou1998} (Maple's UniversalDenominator) is the same as $B^{\rm global}_{J=1}$.
Among  \emph{global} content bounds,
$B^{\rm global}_{J=4}$ is sharp (it equals the content of the set of all entries of all rational solutions).
But our component-wise content bounds are sharper still. For $J=1$ and $J=2$ they are:
\begin{eqnarray*}
\left(\frac1{(x-1)x^2(x+2)p},  \frac1{(x^3(x+1)(x+3)q},  \frac1{(x-1)x(x+1)(x+2)p},  \frac1{x(x+1)^2(x+3)q}\right)^t \\
\left(\hspace{16pt} \frac{x+1}{(x-1)p} \hspace{20pt},   \hspace{14pt}       \frac{x+2}{x^2 (x+3)q } \hspace{17pt},   \hspace{30pt} \frac1{(x-1)p}
\hspace{30pt},  \hspace{25pt} \frac{x+2}{x q}
\hspace{23pt} \right)^t \mbox{}
\end{eqnarray*}
The $J=2$ component-wise bound is much sharper than the sharpest global bound. 
In fact, even the $J=1$ component-wise bound is better than the sharpest global bound
(compare the degree of its denominators with that of $B^{\rm global}_{J=4}$).

The component-wise bound for $J = 1$
involves computing $M^{-1}$ but this is done in all variations. For $J=2$ we also have to compute
two matrix products $M_2$ and $M_{-2}$. 
After that, 
we have to compute valuations of their entries, as these valuations form the entries of the exponent-functions $E_j$. If $Q \in \Q(x)$ is an entry of $M_j$, then
a full factorization of $Q$ immediately gives its valuation at every prime $q \in \Q[x]$.
However, a full factorization also computes information we do not need,
since the only valuations we use are at primes $q$ of the form  $\tau^k(p)$ with $p$ as in the algorithm.

We need to compute valuations rapidly in order for the component-wise algorithm to be quick. 
With modular techniques one can quickly compute an upper bound for the valuation of a rational function $Q$ at any $q$,
correctness can then be proved with a trial division.

For special matrices such as the example $M$ above, only a few rational functions need to be factored for the $J=1$ component-wise
algorithm, namely $a_0$ and $a_1$ (then use that $b$ is a shift of $1/a_0$).
The same is true for ``exterior power systems'' which are used \cite{BronsteinTalk} to factor general difference operators
(the eigenring method only factors special cases, in particular LCLM's). The first author's factoring implementation \cite{RFactors} is
set up in a way where rational solutions are already polynomials, but the implementation still computes a component-wise content bound
because it significantly reduces the degrees of the polynomials that the algorithm has to find.

\section*{Acknowledgements}\label{sec:ack}
Mark van Hoeij was supported by NSF grant 1618657.
Johannes Middeke was supported by the Austrian Science Fund (FWF)
grant SFB50 (F5009-N15).

\bibliographystyle{elsarticle-harv}

\begin{thebibliography}{12}
\expandafter\ifx\csname natexlab\endcsname\relax\def\natexlab#1{#1}\fi
\expandafter\ifx\csname url\endcsname\relax
  \def\url#1{\texttt{#1}}\fi
\expandafter\ifx\csname urlprefix\endcsname\relax\def\urlprefix{URL }\fi

\bibitem[{Abramov(1995)}]{Abramov1995}
Abramov, S.~A., 1995. Rational solutions of linear difference and
  $q$-difference equations with polynomial coefficients. Programming and
  Computer Software 21~(6), 273--278, translated from Russian.

\bibitem[{Abramov and Barkatou(1998)}]{AbramovBarkatou1998}
Abramov, S.~A., Barkatou, M.~A., 1998. Rational solutions of first order linear
  difference systems. In: Proceedings of ISSAC'98.

\bibitem[{Abramov and Khmelnov(2012)}]{AbramovKhmelnov2012}
Abramov, S.~A., Khmelnov, D.\, E., 2012. Denominators of rational solutions of
  linear difference systems of an arbitrary order. Programming and Computer
  Software 38~(2), 84--91.

\bibitem[{Barkatou(1999)}]{Barkatou1999}
Barkatou, M., 1999. Rational solutions of matrix difference equations: the problem
  of equivalence and factorization. In: Proceedings of ISSAC'99. Vancouver, BC,
  pp. 277--282.

\bibitem[{Bronstein(2006)}]{BronsteinTalk}
Bronstein, M., 2006. Factorization and hypergeometric solutions of linear
  recurrence systems. (Ideas due to Manual Bronstein, presented at the
  conference in memory of Manuel Bronstein).
\urlprefix\url{https://www.math.fsu.edu/~hoeij/slides/bronstein/slides.pdf}

\bibitem[{Chen et~al.(2008)Chen, Paule, and Saad}]{CPS:08}
Chen, W.~Y., Paule, P., Saad, H.~L., 2008. Converging to {G}osper's algorithm.
  Adv. in Appl. Math. 41~(3), 351--364.
\newline\urlprefix\url{http://dx.doi.org/10.1016/j.aam.2007.11.004}

\bibitem[{Gheffar and Abramov(2011)}]{GheffarAbramov}
Gheffar, A., Abramov, S.~A., 2011. Valuations of rational solutions of linear
  difference equations at irreducible polynomials. Adv. Appl. Math. 47~(2),
  352--364.
\newline\urlprefix\url{https://doi.org/10.1016/j.aam.2010.07.002}

\bibitem[{van Hoeij(1998)}]{VHo98}
van Hoeij, M., 1998. Rational solutions of linear difference equations. In:
  Proc. of ISSAC'98. pp. 120--123.

\bibitem[{van Hoeij(2020)}]{RFactors}
van Hoeij, M., 2020. Factoring recurrence operators (RFactors) implementation available at
\urlprefix\url{https://www.math.fsu.edu/~hoeij/algorithms}

\bibitem[{Middeke(2017)}]{Middeke2017}
Middeke, J., 2017. Denominator bounds and polynomial solutions for systems of
  $q$-recurrences over $\mathbb{K}(t)$ for constant $\mathbb{K}$. In:
  Proceedings of ISSAC. Kaiserslauern, Germany, pp. 325--332.

\bibitem[{Schneider and Middeke(2017)}]{SchneiderMiddeke2017}
Schneider, C., Middeke, J., 2017. Waterloo Workshop on Computer Algebra.
  submitted, Ch. Denominator Bounds for Systems of Recurrence Equations using
  $\Pi\Sigma$-Extensions, \url{https://arxiv.org/abs/1705.00280}.

\bibitem[{Singer(1996)}]{eigenring}
Singer, M., 1996. Testing reducibility of linear differential operators: A
  group theoretic perspective. AAECC 7, 77--104.

\end{thebibliography}

\providecommand{\noopsort}[1]{}

\end{document}